\numberwithin{equation}{section}
\numberwithin{figure}{section}
\theoremstyle{plain}
\newtheorem{thm}{Theorem}
\begin{document}

\title{Gods as Topological Invariants}

\author{Daniel Schoch}

\date{April 1st, 2012}
\begin{abstract}
We show that the number of gods in a universe must equal the Euler
characteristics of its underlying manifold. By incorporating the classical
cosmological argument for creation, this result builds a bridge between
theology and physics and makes theism a testable hypothesis. Theological
implications are profound since the theorem gives us new insights
in the topological structure of heavens and hells. Recent astronomical
observations can not reject theism, but data are slightly in favour
of atheism.
\end{abstract}

\keywords{Topology, Euler characteristics, manifolds, invariants, mathematical
theology, mathematical joke.}

\maketitle

\section{Motivation}

Conventional theology builds on faith and metaphysical assumptions.
While faith is unquestionable, metaphysics is generally considered
untestable. This lead to the widespread assumption that theology and
natural science form non-overlapping magistrates. However, nothing
could be further away from the religious tradition. Both Arabic and
Christian medival thinker were intending a synthesis between the Aristotelic
scientific view and their specific religion, in particular the Abrahamitic
monotheism. Up to the early modern times, Galileis processes and the
rejection of atomism by the Catholic church have revealed possible
conflict zones between both fields. 

Although metaphysical speculations always linked theological and mathematical
concepts, the introduction of distinctive mathematical methods to
support theism has long been attributed to Euler. In his famous alledged
encounter with Diderot, Euler presented a mock algebraic proof to
embarrass the atheist philosopher Diderot, which is probably a 19th
century legend. The true core of the legend, however, might be that
the possibility of an algebraic proof of the existence of god was
discussed among 18th century intellectuals \cite[p. 129]{STR67}.

The first successful step in mathematization of theology was taken
by Goedel \cite{SOB87}. His formalization of the ontological argument,
originally formulated in terms of modal logic, has been reconstructed
in set-theoretical form \cite{ESS93}. It was noticed that the set
structure was identical to the topological concept of an ultrafilter,
which led to speculations about the specific relations between theology
and topology \cite{CAL95}. 

It is the aim of this paper to advance these a priori speculations
and convert them to a testable theory, linking theology and cosmology
very much in the spirit of the medival Christain tradition. Although
current data is not deceisive in this matter and is slightly supporting
a zero-god universe, the path has been paved for a fruitful interdisciplinary
collaboration of physics, mathematics, philosophy and theology.

\section{The Cosmological Argument}

The cosmological argument (Platon, Philoponos, Aquinas etc.) states
that a First Cause of the universe must exist, since no causal chains
are finite and do not contain loops. Insofar it is based on the simple
fact that any component of an ordered and circle-free finite graph
is a tree and thus must have a first vertex. To apply this to a pre-Einstein
universe one must add the assumptions that matter does not by itself
emerge out of the vacuum, that no motion can occur out of rest, and
that initial or primordial matter is at rest. It follows that a First
Mover must have initiated all cosmic motion at the moment the universe
comes into existence. Since it is not by itself a natural phenomenon,
such a mighty cause can only be a god.

However, this argument does not remain valid if the universe has origined
from a Big Bang. If the universe expands out of a zero-volume point
at $t=0$, there no causal relation can connect an event at some time
$t>0$ of the existing universe by an event $t<0$ before the Big
Bang. In the watchmaker analogy, the watch could not have been wound
up.

We nevertheless embrace the result of the argument, as far as it is
restricted to the types of steady-state universes for which it was
originally intended. It goes without saying that anything beyond the
three-dimensional Euclidean space was out of imagination for the medieval
scholar. Before Einstein, time was considered absolute and independent
of space and matter. A physical explanation for a universe emerging
out of nothing was unthinkable and incompatible with the mechanics
of their time, may it be Aristotelian, Galileian or Newtonian. The
initial singualarity of an Einstein-Friedman universe is, however,
a distinctive topological feature of the manifold itself. We assume
therefore, in accordance with the cosmological argument, that a finite
Aristotelian universe, which manifold can be desribed by a compact
subset of $\mathbb{R}^{3}$ homeomorphic to a ball (a 3-cell), has
one and only one god.

\section{The Main Theorem}

Let $U$ be a unviverse with underlying manifold $M_{U}$. By $\Theta\left(U\right)$
we denote the number of gods in the universe. We postulate the following
axioms. 

From the cosmological argument we obtain

\textbf{Axiom 1.} The number of gods in a 3-cell is one.

Gods are eternal, invariable and do not depend on the evolution of
the cosmos under the laws of physics. Changing the latter would have
had no influence on the existence of gods. Since the laws of physics
are continuous, if $M_{0}$ and $M_{1}$ are the underlying manifolds
of the spatial universe at some time $t_{0}$ and $t_{1}$ (in comoving
coordinates), respectively, they must be homotopy equivalent. We therefore
obtain

\textbf{Axiom 2.} The number of gods is a homotopy invariance.

Clearly, each god only belongs to only one universe. In a multiverse
consisting of disjoint unions of universes, the number of gods must
therefore satisfy additivity.

\textbf{Axiom 3.} Let $U$ and $U'$ be separated universes with compact
manifolds. Then the number of gods in the disjoint union is the sum
of the numbers of gods of the parts,\begin{equation}
\Theta\left(U\sqcup U'\right)=\Theta\left(U\right)+\Theta\left(U'\right).\label{eq:Sum}\end{equation}

\begin{thm}
The number of gods in a universe equals the Euler characteristics
of the underlying manifold,\[
\Theta\left(U\right)=\chi\left(M_{U}\right).\]
\end{thm}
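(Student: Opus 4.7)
The plan is to show that $\Theta$ satisfies the same characterizing properties as the Euler characteristic on compact manifolds, and hence must coincide with $\chi$. First I would use Axiom 2 to replace $M_U$ by a finite CW complex, which is always possible for a compact manifold. Axioms 1 and 2 jointly force $\Theta = 1$ on every contractible compact space, since a 3-cell is contractible and any two contractible spaces are homotopy equivalent; this fixes $\Theta$ on every individual cell of a CW decomposition, regardless of dimension.

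The remaining task is to show that reassembling the cells into $M_U$ via their attaching maps turns the unsigned count $\sum_k c_k$, which Axiom 3 delivers for the disjoint union, into the signed sum $\chi(M_U) = \sum_k (-1)^k c_k$. My strategy is induction on the cells: for a single attachment $X = A \cup_\phi e^k$ along $\phi \colon S^{k-1} \to A$, the target identity is $\Theta(X) = \Theta(A) + (-1)^k$. I would derive it from a Mayer--Vietoris style relation
\[
\Theta(A \cup_\phi e^k) + \Theta(S^{k-1}) = \Theta(A) + \Theta(e^k),
\]
obtained by decomposing a double mapping cylinder of $\phi$ into pieces that are genuinely disjoint, and hence handled by Axiom 3, while remaining collectively homotopy equivalent to the original configuration. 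The sphere values $\Theta(S^{k-1}) = 1 + (-1)^{k-1}$ then follow recursively from the base case $\Theta(S^0) = 2$ (Axiom 3 applied to two points, each a contractible compact $0$-manifold).

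The main obstacle, on which the whole argument pivots, is justifying this inclusion--exclusion identity: Axiom 3 as stated governs only genuine disjoint unions, whereas a cell attachment glues $S^{k-1} \subset \partial e^k$ into $A$. The bridge must be built by homotopy-theoretic manipulation, using collar neighborhoods to slide the overlap off to a thin shell that can be deformed away, so that a Mayer--Vietoris principle is effectively extracted from Axioms 2 and 3. Once the identity is in place, induction on the cell count closes the argument and yields $\Theta(M_U) = \chi(M_U)$ on any finite CW model of $M_U$, and therefore on $M_U$ itself by Axiom 2. As sanity checks, the theorem recovers $\Theta(D^3) = 1$ (consistent with Axiom 1) and predicts $\Theta(S^3) = 0$, furnishing the empirical prediction the abstract promises.
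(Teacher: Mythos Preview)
Your approach is correct and shares its core mechanism with the paper: both extract an inclusion--exclusion (Mayer--Vietoris) identity from Axioms~2 and~3 by homotoping a glued space into a genuine disjoint union, and both normalize via $\Theta(\mathrm{pt})=1$ from Axioms~1 and~2. The difference is one of packaging. The paper establishes the full inclusion--exclusion principle $\Theta(M\cup N)=\Theta(M)+\Theta(N)-\Theta(M\cap N)$ for arbitrary compact $M,N$ in one stroke---by homotoping $M\cup N$ to disjoint copies of the closures of $M\setminus N$, $M\cap N$, and $N\setminus M$---and then simply invokes the ``well-known characterization'' of $\chi$ as the unique homotopy-invariant, additive, point-normalized functional. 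You instead specialize inclusion--exclusion to the single case of a cell attachment and run an explicit induction over a CW skeleton, computing $\Theta(S^{k-1})$ recursively along the way. Your route is more self-contained and makes the appearance of the alternating signs transparent; the paper's route is shorter but outsources the final step to a black-box uniqueness theorem. The delicate point you flag---that Axiom~3 only governs true disjoint unions, so collars and homotopy must be used to separate the pieces---is exactly the step the paper glosses over with the phrase ``we construct a homotopy,'' so you have identified, rather than introduced, the soft spot in the argument.
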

\begin{proof}
By the second axiom, the number of gods only depend on the underlying
manifold. Thus we can write w.l.o.g. $\Theta\left(U\right)=\Theta\left(M_{U}\right)$.
Since the $n-$cell is null homotope, by axiom 1 and 2 we obtain\begin{equation}
\Theta\left(\mbox{pt}\right)=1.\label{eq:Pt}\end{equation}
Let further $M$ and $N$ be any compact sets. We construct a homotopy
transforming $M\cup N$ into separated copies $A,B,C$ of the closure
of $M\setminus N$, $M\cap N$, and $M\setminus N$, respectively.
Equation (\ref{eq:Sum}) together with axiom 2 implies\begin{eqnarray*}
\Theta\left(M\cup N\right) & = & \Theta\left(A\right)+\Theta\left(B\right)+\Theta\left(C\right),\\
\Theta\left(M\right) & = & \Theta\left(A\right)+\Theta\left(B\right),\\
\Theta\left(N\right) & = & \Theta\left(B\right)+\Theta\left(C\right),\\
\Theta\left(M\cap N\right) & = & \Theta\left(B\right).\end{eqnarray*}
We obtain the inclusion-exclusion principle,\[
\Theta\left(M\cup N\right)=\Theta\left(M\right)+\Theta\left(N\right)-\Theta\left(M\cap N\right).\]
 By a well-known characterization, the latter together with (\ref{eq:Pt})
implies $\Theta\left(M\right)=\chi\left(M\right)$ for all compact
manifolds $M$.
\end{proof}
The product theorem for the Euler characteristics implies that for
manifolds $M$ and $N$\[
\Theta\left(M\otimes N\right)=\Theta\left(M\right)\cdot\chi\left(N\right).\]
In particular, if time is itself an interval $T\subset\mathbb{R}$,
we find that the number of gods are independent of time,\[
\Theta\left(M\otimes T\right)=\Theta\left(M\right).\]
This is compatible with the scholastic view introduced by Boetius
in his \emph{Consolations} that god is above time. However, the same
formula spells trouble for all theologies which are based on a cyclic
conception of time, which is widespread in India (Veda) and among
native American religions. Since $\chi\left(S^{1}\right)=0$ there
are no gods in such a universe,\[
\Theta\left(M\otimes S^{1}\right)=0.\]

\section{Universes, Heavens and Hells}

The number of gods has come out to be an integral number. This rules
out any demi-gods or lower devas, as they are known from Greek or
Indian mythology. However, the divine cardinal can get negative; with
the obvious interpretation of these gods being devils. By the additivity
theorem, components of positive and negative Euler characteristics
could cancel each other out. We can safely assume, however, (and there
is plenty of support from religious texts) that gods and devils can
not have stable coexistence in the same part of the universe. Thus
each component contains only gods or devils, but not both. The absolute
value of the Euler characteristics of the universe therefore equals
the number of supreme and most inferior beings in it, dependent on
the sign.

A lot of types of universes are godless. These are all spaces which
contain the 1-sphere (circle) as a factor, such as the tori and all
products of them with an arbitrary manifold. This also applies if
the universe is infinite Euclidean, but has additional warped dimensions,
as suggested in string theory. Also, a spherical three-dimensional
universe would have no gods. The only non-exotic topology with a positive
number of gods are Euclidean spaces, which all contain exactly one
god, which is well in accordance with the Jewish-Christian and Islamic
tradition.

An interesting theoretical question concerns the topological structure
of heaven. The $3$-dimensional Eucledian space is suitable, but since
souls are immaterial, they are not confined to three-dimensionality.
It is unlikely, however, that heaven is a bounded manifold, since
there should be no limit in heaven. One possible structure of a monotheistic
heaven is the real projective plane. Another is the 2-sphere, but
it requires a pair of gods. The 2-spehere would have been a preferred
choice of Greek philosophers, since its imbedding in the $\mathbb{R}^{3}$
is in perfect alignment with the Pythagorean-Platonic idea of a perfect
body, which was influential in the early Scholastics. A suitable pair
of gods entrenched in the Abrahamitic tradition is Jahwe and Asherah,
but apparently the couple broke up some time after the late Bronze
age \cite{FIN01}.

Constraints by traditional religion are more relaxed insofar there
is no dogma imposing an upper boundary for the number of devils. As
in the case of heavens, souls are not restricted to exist only in
spaces with dimesion at least three. If hells are two-dimensional
closed surfaces with topological genus $g>1$, then the Euler characteristics
is $2-2g$, which would correspond to a hell with an even number of
$2\left(g-1\right)$ devils. Such hells can best be envisioned as
multiple tori. The double torus in the form of the figure 8 has Euler
characteristics -2. Each torus attached decrease the Euler characteristics
by a further -2. This suggest that the rings of hell are not concentric,
as Dante speculated, but that they are lined up such that the soul
transgresses through a complete half ring of hell before entering
the next level. This is a more hellish scenario than Dante's concentric
model and thus more realistic.

\section{Evidences}

The topology of the universe could in principle be observed if it
is finite and small (respectively old) enough to have light travelled
through it. In this case one would observe multiple images of the
same constellations of a matter, which for each point source of light
takes the form of a circle. However, this method is not discriminative
for a large or young universe, where it only yields a lower bound
for the size of the univese \cite{BIE11}. A test for infinity of
the universe in one or several dimensions can be based on statistical
analysis of the temperature fluctiations of the background radiation,
which is a remainder of the Big Bang. If one or more dimensions of
the space are topologica circles, space remains homogenuous, but isotropy
is violated.

Unfortunately, topology is just constrained, but not determined by
local curvature. Data from the Wilkinson microwave anisotropy probe
suggest that the universe is flat with only 0.5\% margin of error
\cite{COR04}. A flat universe can have vanishing total energy consistent
with an origin from nothing. An infinite Euclidean space fits the
data. Some exotic topologies such as the Poincaré homology sphere
and the Picard horn have been claimed consistent with the findings,
but for the former this has been challenged \cite{KEY07}. A recent
statistical analysis on the number of infinite dimensions compared
the Euclidean space $\mathbb{R}^{3}$, the 3-torus $T^{3}$ and the
manifolds $T^{2}\otimes\mathbb{R}$ and $S^{1}\otimes\mathbb{R}^{2}$
\cite{ASL11}. Only the Euclidean space has a non-vanishing Euler
characteristics. The most probable topology of the universe was found
to be $T^{2}\otimes\mathbb{R}$, which would support the atheist view
brought forward by many leading cosmologists.

\end{document}